\documentclass[final,a4paper]{IEEEtran}
\usepackage{background}
\SetBgScale{0.8}
\SetBgAngle{0}
\SetBgOpacity{1}
\SetBgColor{blue}
\usepackage{cite,hyperref}
\SetBgPosition{current page.south}
\SetBgVshift{3cm}
\SetBgContents{\parbox{\textwidth}{\textcopyright{} 2019 IEEE.  Personal use of this material is permitted.  Permission from IEEE must be obtained for all other uses, in any current or future media, including reprinting/republishing this material for advertising or promotional purposes, creating new collective works, for resale or redistribution to servers or lists, or reuse of any copyrighted component of this work in other works.}}

\usepackage{algorithm, algpseudocode,graphicx}
\usepackage{amsmath,amsthm,amssymb,amsfonts,latexsym,mathrsfs}
\newcommand{\tmp}{Train Marshalling Problem}
\newcommand{\ground}{A(n)}

\theoremstyle{definition}
\newtheorem{theorem}{Theorem}

\newtheorem{lemma}[theorem]{Lemma}

\newtheorem{example}{Example}

\begin{document}
	\title{A Novel Dynamic Programming Approach
	to the Train Marshalling Problem}
	\author{Hossein~Falsafain
and Mohammad~Tamannaei
\thanks{H.~Falsafain is with the
	Department of Electrical and Computer Engineering, Isfahan University of Technology, Isfahan 84156-83111, Iran (e-mail: {h.falsafain@cc.iut.ac.ir}).}
\thanks{M.~Tamannaei
	is with the
	Department of Transportation Engineering, Isfahan University of Technology, Isfahan 84156-83111, Iran
	(e-mail: {m.tamannaei@cc.iut.ac.ir}).}}
\markboth{Accepted Article - IEEE TRANS. INTELL. TRANSP. SYST. - Falsafain \& Tamannaei - 
	\url{https://doi.org/10.1109/TITS.2019.2898476}}%
{Accepted Article - IEEE TRANS. INTELL. TRANSP. SYST. - Falsafain \& Tamannaei - 
\url{https://doi.org/10.1109/TITS.2019.2898476}}
\maketitle

	\begin{abstract}
Train marshalling
is the process of
reordering the railcars 
of a train
in such a way that
the railcars with the 
same destination appear
consecutively in 
the final, reassembled train.
The process takes place 
in the shunting yard
by
means of a number of classification tracks.
In the \tmp{} (TMP),
the objective 
is to perform this
rearrangement of the railcars
with the use of
 as few classification tracks as possible.
The problem
has been shown to be NP-hard, and
several exact and approximation algorithms
have been developed for it.
In this paper, we propose a
novel
exact dynamic programming (DP)
algorithm
for the TMP. The worst-case 
time
complexity of this algorithm
(which is exponential in the number of
destinations
and linear in the number of railcars)
 is lower than that of
  the best presently available
   algorithm for the problem,
   which is an inclusion-exclusion-based
   DP algorithm.
In practice,
the proposed algorithm 
can provide
a substantially improved performance
compared to its 
inclusion-exclusion-based counterpart,
as demonstrated
by the experimental results.
	\end{abstract}

	\begin{IEEEkeywords}	
Dynamic Programming,
Fixed Parameter Tractability,
NP-Hard Combinatorial Optimization Problems,
Rail Transportation,
Shunting Yards,
Train Marshalling Problem 	
	\end{IEEEkeywords}

	\section{Introduction} 
 In a shunting yard (a.k.a. classification or marshalling yard),
 the 
 railcars of an incoming (inbound)
 train are uncoupled,
 rearranged, and 
 then reassembled to 
 form an outgoing (outbound) train~\cite{orspec,ejor,haahr,dummies}.
 A shunting yard consists of multiple parallel 
 classification tracks
 (a.k.a. auxiliary rails) on which
 partial outbound trains can
 be assembled before being pulled together to form
 an outbound train.
 In a shunting yard,
one primary objective is to
rearrange the railcars of an inbound train
into groups that share the same destination 
\cite{haahr,rinaldi,dahlhaus,same-destination-1,same-destination-2}.
These groups, which we
refer to as \textit{blocks},
 are then coupled together to
 form a new outbound train.
 This process is referred to as
  \textit{train marshalling}, and 
 is accomplished by means of a number of 
 classification tracks.
While
a shunting yard 
has only a limited number
of classification tracks,
there may be several
 trains to be processed at
the same time.
Therefore, it is obvious that
the aim should be to keep 
the number of classification tracks 
per inbound train
as small as possible \cite{Adlbrecht,dahlhaus,min.tracks}.
In the \tmp{} (TMP),  
  using as few classification tracks as
  possible is the only objective \cite{ejor,beygang}.
 For a 
 detailed survey about other commonly used and
  more recent train classification methods
  (from an algorithmic point of view), see~\cite{dummies}.

Let an inbound train 
${\cal T}$
be given.
  In the TMP,
  in its optimization
  version,
  the goal is to 
find the minimum
number 
of classification tracks
needed to rearrange
the railcars of $\cal T$
  in such a way that the railcars
   sharing the same destination are grouped
   together.
     (Obviously, the 
   number of 
   classification tracks
   needed for a
   train 
   is never more than the number
   of destinations.)
   The decision version of
   the problem 
   can be stated as follows:
   Given an inbound train $\cal T$
    and a positive integer
   $k$,
   decide whether or not
   the railcars of $\cal T$ can properly be rearranged
   by means of at most $k$ classification tracks.

  The TMP has been
   proved to be NP-hard 
   (using a reduction from the
   numerical matching with target sums problem)
   \cite{dahlhaus}.
Generally, approaches for solving a particular
NP-hard problem can be classified into two main categories:
exact and inexact methods.
Exact methods are guaranteed to find an optimal solution if one exists
(see, e.g., \cite{rinaldi,tamannaei,shafia}).
However, they generally require exponential time in the worst case.
If the actual instances are small-to-moderate-sized, such algorithms may be
perfectly satisfactory.
The worst-case exponential complexity
is acceptable
if the algorithm is effective and fast enough
for the problem instances appearing 
in the specific application considered \cite{clrs,juraj}.
For some interesting  
discussions on exact algorithms in general, the reader is
referred to~\cite{exactexpo,hardsurvey,juraj}.
Inexact techniques, on the other hand, can find
reasonable suboptimal solutions in polynomial time.
Inexact methods themselves can be grouped into two families:
approximation algorithms, which guarantee to return a (suboptimal) solution
that is within a certain factor of the optimal solution
(see, e.g., \cite{approx1,approx2}),
and heurstic/metahueristic approaches, which do not offer any performance
guarantee but have been found
 to be very successful in 
 solving NP-hard problems (see, e.g., \cite{vns,ga,memetic,aco,haahr,tamannaei,shafia}).
It should be noted that some 
 tractable cases of the
 TMP have
 been identified in the literature \cite{sinica,dahl2}.
 In fact, some special cases of an NP-hard problem may 
 be solvable in polynomial time \cite{clrs}.
 (Reference~\cite{sinica} is
  the paper in which the TMP
 originally posed.)

 In~\cite{dahl2}, Dahlhaus et al. 
 showed that the problem 
 is  approximable within
 ratio $2$.
In \cite{beygang}, both online and offline versions of the TMP
have been considered.
In the offline scenario,
some basic results and lower bounds on the optimal solutions
have been presented.
Furthermore,
an analysis of 
the online version of the problem and  a 
2-competitive deterministic greedy online algorithm
have been provided.
It has also be shown that the competitive factor of 2 is indeed
best possible among all deterministic online algorithms.
 In \cite{fpt}, Brueggeman et al.
 established that the TMP is 
 fixed parameter tractable
 with respect to the 
 number of classification tracks $k$.
To be more precise,
if an inbound train 
$\cal T$ with $n$ railcars
 having $t$
  different destinations, and a positive integer $k$ are given,
then
for deciding whether or not 
the railcars of $\cal T$ can be rearranged in an
appropriate order by using at most $k$
classification tracks,
 the algorithm proposed
in \cite{fpt}
requires $O(2^{O(k)}\mathrm{poly}(n))$ time
and $O(n^2k2^{8k})$ space. 
The algorithm is 
based on the
dynamic programming
(DP)
 paradigm.
Finally, very recently, 
 Rinaldi and Rizzi 
 have developed 
 an exact dynamic programming algorithm for the problem
 in which
 the TMP
 is solved as a problem
 of finding a rainbow path
 in an edge-colored 
 directed graph~\cite{rinaldi}.
 This algorithm, which is based on 
 the principle of inclusion-exclusion,
 is of time complexity $O(nkt^22^t)$
 and
 space complexity
 $O(nkt)$.
 This implies that the TMP is
 fixed parameter tractable with the number of destinations $t$.
 The readily apparent advantage
 of this approach is its 
 polynomial space complexity.
 When used to solve the optimization
 version of the TMP, in a binary search fashion,
 the time complexity of 
 the approach becomes
 $O(nt^2 2^t U \log_2⁡U )$, where $U$ is an 
 upper bound on the optimal number of classification tracks.
 Although elegant, this procedure 
 only returns the
 value of an optimal solution,
 i.e. the minimum number of required classification tracks,
 but not an optimal solution itself.

In this paper, we propose a novel exact
DP algorithm
for the TMP. 
DP is indeed an approach of great importance in the design of both
polynomial-time and exponential-time algorithms \cite{clrs,exactexpo,juraj,hardsurvey}.
In contrast to the exact methods 
described in
\cite{rinaldi,fpt},
our method is developed to
\textit{directly} solve the
\textit{optimization version} of the TMP
(i.e.,
it does not need to
make
successive calls to
a procedure that solves
the decision version of
the problem).%
\footnote{It should be remembered that, there is not a
 significant difference between the optimization version of the
 TMP
 and its decision version.
 Obviously,
 an
 algorithm that can solve the optimization version
 of the problem,
 can automatically solve the decision
 version as well (for any given $k\in\mathbb{N}$).
 On the other hand,
 if one can solve the decision version 
 of the TMP for any given $k$,
 then one can also 
 minimize the number of classification tracks needed to obtain a
 train of desired property:
 For a given inbound train $\cal T$
with $n$ railcars having $t$ different destinations,
 the decision version of
 the problem is solved repeatedly,
 by incrementally 
 increasing the value of $k$
 from $1$. 
 When the answer 
 turns from ``no'' to ``yes'',
 then the solution is at hand.
 Furthermore,
 using binary search
 (instead of linear search),
 one needs to solve
  the decision version
  only
 for $O(\log_2 U)$ different values of $k$,
 where $U$ is an upper-bound on
 the optimal number of classification tracks.}
   The proposed algorithm is capable of finding not only 
   the minimum
   number of required classification tracks, but also
   the classification track that each railcar is
   assigned to, in an optimal solution, as well.
 The algorithm
 is of worst-case time complexity
 $O(nt 2^t)$
 and worst-case space complexity
  $O(n2^t )$.
  The Numerical
  experiments presented in Section~IV
  demonstrate that
  the algorithm
  substantially
  outperforms its
  inclusion-exclusion-based
  counterpart
  in terms of computation time.

The outline of this paper is 
as follows. 
In Section~II,
we 
provide some notations,
and
present a
rigorous
formulation of the TMP.
Section~III is dedicated
to the presentation
of our novel dynamic programming
algorithm for the TMP.
Section~IV
is devoted to experimental results,
and to
numerical comparisons with the
best currently available
approach to the problem.
(We extensively
compared our technique with
the approach presented in \cite{rinaldi}.)
 Finally, some concluding remarks are given
in Section~V.

\section{Notations and Problem Statement}

We begin this section by introducing some notation
that will be used throughout the paper.
For a positive integer $N\in {\mathbb{N}}$,
we denote by $[N]$ the set $\{1,2,\ldots,N\}$.
Each railcar of the train
is identified by its index.	
The order of 
the railcars in
the inbound train
corresponds therefore
to the sequence
$\langle1, 2, \ldots , n
\rangle$.
By a block,
we refer to 
an increasing sequence
consisting 
 of all
railcars 
that share the same
 destination.
A TMP instance
can
be described by a triple
$(n,t,{\cal B})$,
where $n$
is the number of
railcars of the train,
$t$ is the number of
destinations,
and
$\cal B$ is the set of all blocks.
(Obviously,
$|{\cal B}|=t$ and
$\bigcup_{B\in {\cal B}} B=[n]$.)
The objective is to find the smallest number of classification tracks
by which the inbound train can be rearranged according to the destinations.
 The railcars are considered one after another
according
to their order 
in the inbound train.
The railcars assigned to each classification track form a sequence.
In fact,
each railcar is guided to one of the 
classification tracks,
and placed
behind the  
already sequenced railcars.
The outbound train
is obtained by reassembling
the railcars on
the first track 
(based on their order of arrival)
followed by
the railcars on the second 
track and so on. 
\begin{example}
	\begin{figure}[htb]
		\centering
		\includegraphics[width=\linewidth]{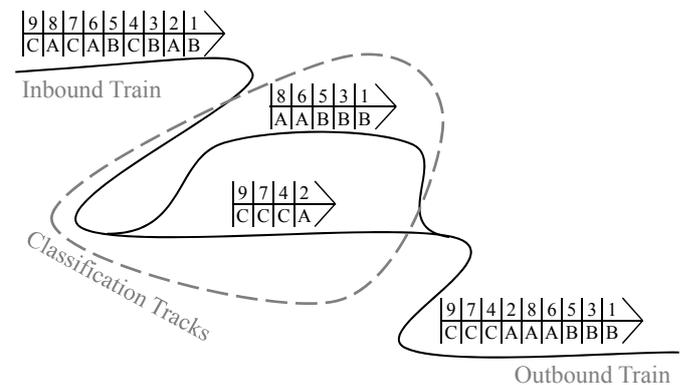}
		\caption{A depiction
			of the rearrangement
			process.
			The letter `A' 
			corresponds to the first destination,
			the letter `B' 
			corresponds to the second destination,
			and the letter `C' 
			corresponds to the third destination.}
		\label{illustrative}
	\end{figure}
	Figure~\ref{illustrative}
	depicts an instance of the problem in
	which
	$n=9$, $t=3$, and the set of
	blocks is
	$\mathcal{B}=\{\langle1,3,5\rangle,\langle2,6,8\rangle,
	\langle4,7,9\rangle\}$.
	As can be seen from the figure, to
	rearrange
	the railcars according to their destinations,
	only two classification tracks are needed.
\end{example}

The TMP
can be stated in an equivalent
way,
which is more convenient for
our purposes:%
\footnote{This
alternative statement of the problem
has been proposed by
Donald E. Knuth,
in a personal letter to the
authors of \cite{dahlhaus}.}
Let an instance
$(n,t,{\cal B})$
of the TMP
be given.
We define
$\ground$ to
be an infinite sequence 
obtained by concatenating infinitely many copies of the sequence
$\langle 1,2,\ldots,n\rangle$.
We refer to each subsequence
$\langle 1,2,\ldots,n\rangle$ of $\ground$
as a \textit{segment}.
(We will shortly see that
each segment corresponds to
a classification track.)
The entries of
$\ground$ can be seen as 
unoccupied positions
in which
the elements
of $\bigcup_{B\in{\cal B}}B=[n]$ can be \textit{placed}.
Each element of the set $[n]$
must occur exactly once in
$\ground$.
\textit{The occurrence of
the element $i\in[n]$
in the $\kappa$th segment
of $\ground$
is
equivalent to
the assignment of 
the $i$th railcar of the
inbound
 train
to the $\kappa$th classification track.}
It can readily be verified that
a placement 
in which
the elements of every
block
 $B\in {\cal B}$
occur  right next
 to each
other,
leads to
a solution to the given TMP instance,
and vice versa.
(We refer
to such a placement
as a solution as well.)
More precisely speaking,
a placement
can  be considered
as a solution
if and
only 
if for
every two
distinct blocks
$B$  and $B'$
 in $\cal B$,
none of the
elements of 
$B$ occur
in the interval occupied by the elements of
$B'$.
It is now clear that,
in the alternative statement of the TMP,
the aim is to find a 
placement
that
uses the minimum number of segments.
In other words,
the goal is to find 
a permutation
$\pi$	
of 
the set $[t]$
such that the
occurrence of
the blocks of $\cal B$
one after another,
in the order specified by
$\pi$,
uses the least number of 
segments.
In the following,
we stick to the above-described
alternative statement of the problem.

\begin{example}
Consider the following
instance of the TMP:
\begin{multline}\label{n17t5}
(n=17,\,t=5,\,
{\cal B}=\{B_1=\langle1, 4, 10\rangle, B_2=\langle5, 12, 14\rangle,\\
 B_3=\langle11, 13, 17\rangle, B_4=\langle2, 3, 8, 9\rangle, B_5=\langle6, 7, 15, 16\rangle\}).
\end{multline}
In a solution to
the above-given instance,
the elements of
$B_1$ and $B_2$ may appear in $A(17)$ in the following way:
\begin{multline*}\langle
\ldots,
17,
\underbrace{%
\mathbf{1},2,3,\mathbf{4},5,6,7,8,9,\mathbf{10}}_{B_1}
,11,\\
\underbrace{%
\mathbf{12},13,\mathbf{14},15,16,17,1,2,3,4,\mathbf{5}}_{B_2},
6,\ldots\rangle.
\end{multline*}
It should be remarked that
we defined a block as
an increasing sequence
consisting 
of all
railcars 
that share the same
destination,
but the element
of a block
$B\in \mathcal{B}$
do not necessarily appear in
$\ground$
in an increasing order.
The following 
is
an optimal solution
for the TMP instance described by
(\ref{n17t5}),
which uses three segments:
\begin{multline}\label{opt17}
\langle1,
\underbrace{
\mathbf{2},\mathbf{3},4,5,6,7,\mathbf{8},\mathbf{9}}_{B_4}
,10,
\underbrace{\mathbf{11},12,\mathbf{13},14,15,16,\mathbf{17}}_{B_3},\\
\underbrace{\mathbf{1},2,3,\mathbf{4},5,6,7,8,9,\mathbf{10}}_{B_1},
11,
\underbrace{\mathbf{12},13,\mathbf{14},15,16,17,1,2,3,4,\mathbf{5}}%
_{B_2},\\
\underbrace{\mathbf{6},\mathbf{7},8,9,10,11,12,13,14,\mathbf{15},
	\mathbf{16}}_{B_5}
,17,\ldots\rangle.
\end{multline}
This solution
is depicted schematically in Figure~2.
\end{example}

\begin{figure*}[htb]
	\centering
	\includegraphics[width=0.75\linewidth]{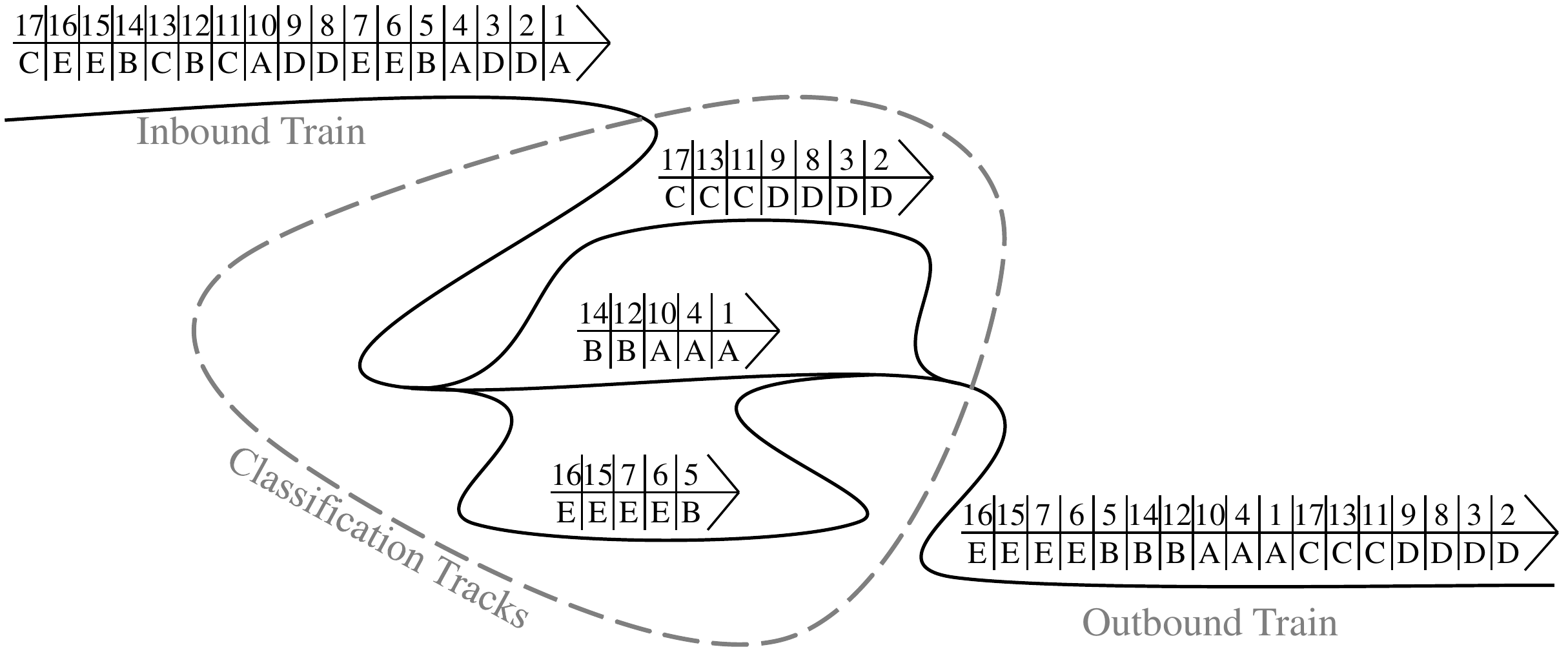}
	\caption{A depiction
	of the optimal solution
(\ref{opt17}) to the TMP
instance given in (\ref{n17t5}).
The letter `A' 
corresponds to the first block,
the letter `B' 
corresponds to the second block,
and so on.}
\end{figure*}

Before proceeding further,
we need to impose a number of 
simplifying assumptions,
which can be made without
loss of generality.
Let an instance $(n,t,{\cal B})$
of the TMP be given.
We can confine our attention to
those solutions
of this instance
in which, firstly,
there are no ``unused positions''	
before the first element of 
the first block of $\cal B$
that appears in $\ground$;
secondly,
there are no unused positions
between the elements of two
adjacent blocks;
and finally,
if $B$ is an arbitrary block in $\cal B$,
then the elements of
$B$ occur in $\ground$
contiguous with each other,
i.e., without any 
unused positions  between them.
Collectively,
the above three cases
indicate that we can, without loss
of generality,
narrow our attention to only
those solutions
in which there are no
positions left unused,
neither within the range of
appearance of the elements of a single
block
 $B\in {\cal B}$,
nor between 
the elements of two successive blocks,
nor before the first block
of $\cal B$
 that
appears in $\ground$.

\begin{example}
In a solution
to the TMP instance described by
(\ref{n17t5}),
the elements of
$B_4=\langle2, 3, 8, 9\rangle$
may appear in $A(17)$ as follows:
\begin{multline*}\langle
\ldots,17,1,
\mathbf{2},3,4,5,6,7,
\mathbf{8},
\mathbf{9},10,11,12,13,14,15,16,17,\\1,2,
\mathbf{3},4,\ldots\rangle.
\end{multline*}
But in this placement,
there is an unused position between
the elements $2$ and $8$,
because
the element $3$
has not been placed immediately
after $2$.
However,
placing the elements of $B_4$
one next to the other,
as required by the assumptions,
can lead to a 
solution
that is
at least as good:
\begin{multline*}\langle
\ldots,17,1,
\mathbf{2},\mathbf{3},4,5,6,7,
\mathbf{8},
\mathbf{9},10,11,12,13,14,15,16,17,\\1,2,
3,4,\ldots\rangle.
\end{multline*}

As another example,
in a solution
to the TMP instance
given by
(\ref{n17t5}),
the
elements of the blocks
$B_4$
and
$B_5$
can
appear in $A(17)$ as
follows:
\begin{multline*}\langle
\ldots,5,\underbrace{\mathbf{6},\mathbf{7},8,9,10,
	11,12,13,14,\mathbf{15},\mathbf{16}}_{B_5},17,\\
1,2,3,4,5,6,7,\underbrace{\mathbf{8},\mathbf{9},
	10,11,12,13,14,15,16,17,1,\mathbf{2},\mathbf{3}}_{B_4},4,\ldots\rangle.
\end{multline*}
But in this placement,
there are two unused positions
between 
$16$
(the last appearance of an element of $B_5$)
and 
$8$ (the first appearance of an element of $B_4$),
which can contain the elements $2$ and $3$ of
$B_4$.
By placing these two elements
immediately after
$16$,
we obtain a solution that 
satisfies our assumptions,
and is at least as good:
\begin{multline*}\langle
\ldots,5,\underbrace{\mathbf{6},\mathbf{7},8,9,10,11,12,13,
	14,\mathbf{15},\mathbf{16}}_{B_5},17,\\1,
\underbrace{\mathbf{2},\mathbf{3},4,5,6,7,
	\mathbf{8},\mathbf{9}}_{B_4},
10,\ldots\rangle.
\end{multline*}
\end{example}

We conclude 
this section by
introducing
two more notations.
Let an instance $(n,t,{\cal B})$
of the TMP be given, and
let $\underline{i}\in[n]$.
For reasons that will become clear
in what follows,
we 
may impose the condition that
none of the elements of $\bigcup_{B\in {\cal B}} B$
 could be placed
before the $\underline{i}$th
position of the first segment of $\ground$.
In the presence of  this condition,
we specify the instance by a quadruple
$(n,t,{\cal B},\underline{i})$.
We will
refer to such instances
as TMP instances as well.
(According to this notation,
a TMP instance $(n,t,{\cal B})$
can
be expressed as 
 $(n,t,{\cal B},1)$.)
Finally,
in this paper,
the notation `$+_n$'
symbolizes
 the usual  modulo-$n$ addition,
except that if the result is $0$,
it is replaced by $n$.
Therefore, for example,
$16+_{17}1=17$ and $17+_{17}1=1$.

	\section{A Dynamic Programming Approach to the \tmp{}}

	Let an
	instance
	$(n,t,{\cal B},\underline{i})$
	of the TMP be given,
	and let
	$\underline{B}=\langle 
	i_1,i_2,\ldots,i_{\ell}
	\rangle$ be the first 
	block of ${\cal B}$ that 
	appears in $\ground$.
	The first element of $\underline{B}$ 
	that appears in $\ground$ 
	is its smallest element that is  
	greater than or equal to
	$\underline{i}$.
	If no such element exists,
	then
	the first occurrence of an 
	element of
	$\underline{B}$
	in $\ground$ is  $i_1$,
	which in fact occurs in the second segment.
We
denote the first occurrence of an element of
$\underline{B}$ in $\ground$
 by $\alpha(\underline{i},\underline{B})$,
and the last occurrence of an element of
$\underline{B}$ in $\ground$ by
 $\omega(\underline{i},\underline{B})$.	
		Under the
 condition
 that
none of the elements of $\underline{B}$
can be placed
before the $\underline{i}$th
position of the first segment of $\ground$,
there are two 
different
cases to consider.
The aim in examining these
cases is to investigate what happens when
the first available
position in the first segment of
the sequence $\ground$
is $\underline{i}$, and
the block
whose
elements appear  immediately  following
$\underline{i}$
 is
$\underline{B}$.
The cases are as follows:
	\begin{itemize}
		\item Case~1, 
		$\underline{i}\leq i_1$
		or $i_{\ell}< \underline{i}$: 
		In this case,
		the first element of
		$\underline{B}$ that appears in $\ground$
		is $i_1$, the last element of
		$\underline{B}$ that appears in $\ground$
		is $i_{\ell}$, and 
		$i_{\ell}$ occurs in the same segment as $i_1$.
		According to the above-defined notations,
		we have
		$\alpha(\underline{i},\underline{B})=i_1$ and 
		$\omega(\underline{i},\underline{B})=i_{\ell}$.
		If~$i_{\ell}< \underline{i}$,
		then
		none of the elements of
		$\underline{B}$ can be placed in the first segment.
		Therefore, 
		$i_1$ and $i_{\ell}$ both
		occur in the second segment.
		\item Case~2,
		$i_1<\underline{i} \leq i_\ell$:
		In this case, we have 
		$i_l < \underline{i}\leq i_{l+1}$
		for some $l\in[\ell-1]$.
		The first element of
		$\underline{B}$ that appears in $\ground$
		is $i_{l+1}$ (i.e.,
		 $\alpha(\underline{i},\underline{B})=i_{l+1}$),
		 and
		the last element of
		$\underline{B}$ that appears in $\ground$
		is $i_{l}$
		(i.e., $\omega(\underline{i},\underline{B})=i_l$).
		Notice that
		$\alpha(\underline{i},\underline{B})=i_{l+1}$
		occurs in the first segment, but
		$\omega(\underline{i},\underline{B})=i_l$
		occurs in the second segment.
		Therefore, the elements of 
		$\underline{B}$ appear in $\ground$ in the following
		order:
		$$
		\underbrace{i_{l+1}, i_{l+2},\ldots,i_{\ell-1},
			i_{\ell}}_
		{\text{In the first segment}}
		,
		\underbrace{i_1,i_2,\ldots,i_{l}}_
		{\text{In the second segment}}.
		$$
	\end{itemize}

In Case~1,
if $i_{\ell}<n$, then
the block $\underline{B}$
does not
completely \textit{exhaust} 
its containing segment.	
This means that there
remain
$n-i_{\ell}$
positions available in the
segment (i.e., 
$i_{\ell}+1,i_{\ell}+2,\ldots,n$),
in which the elements
of the next
block may occur.
Otherwise (i.e., if 
$\underline{i}\leq i_1$ and
$i_\ell=n$),
the block $\underline{B}$
completely exhausts
its containing segment (which is the first segment).
This means that
the 
(yet-to-be-placed) block in
$\cal B$ that immediately follows
$\underline{B}$, starts somewhere
in the second segment.

To distinguish between the cases in which
$\underline{B}$
exhausts the first segment
and the cases in which
there remains at least one position
in the first
segment
that
can be utilized for the placement
of the block that follows $\underline{B}$,
we define an indicator
function
$\delta: [n]\times{\cal B} \mapsto \{0,1\}$
as follows.
(The symbol $\times$
denotes the 
Cartesian product.)
Let $\underline{i}\in[n]$	
and
$\underline{B}\in {\cal B}$
be given.
$\delta(\underline{i},\underline{B})$
takes the value~$1$
if $\omega(\underline{i},\underline{B})$ occurs
in the last position of the first segment
(i.e., $\omega(\underline{i},\underline{B})=n$)
or
occurs somewhere in the second segment,
and $0$ otherwise (i.e., if
$\omega(\underline{i},\underline{B})$
occurs in the first segment and
 $\omega(\underline{i},\underline{B})<n$).
Therefore, 
\begin{equation}\label{delta}
\delta(\underline{i},\underline{B})=
\begin{cases}
0, & \text{$\underline{i}\leq i_1$ and $i_\ell<n$},\\
1, & \text{otherwise}.\\
\end{cases}
\end{equation}

We are now ready to 
describe the algorithm.
Let $(n,t,{\cal B})$
be a given TMP instance.
Our algorithm
will make use of an
$n\times 2^t$ table $K$ whose rows are labeled by the elements of $[n]$, and
whose columns are labeled by the
subsets  of $\cal B$.
For a nonempty subset ${\cal B}'$
of ${\cal B}$,
and an integer $\underline{i}\in[n]$,
we define 
$K[\underline{i},{\cal B}']$
(the entry
at row $\underline{i}$ and column ${\cal B}'$
of $K$)
	to be the 
	minimum number of segments needed
	for placing the elements of 
	${\cal B}'$
	in $\ground$,
	with the restriction that
	the first
	element of $\bigcup_{B\in {\cal B}'} B$
	cannot appear before the $\underline{i}$th
	position of the first segment.
	It can clearly
	be seen from the above definition
	that
	the goal is 
	in fact
	to find $K[1,{\cal B}]$.
The most crucial component of a
 dynamic programming algorithm
is a recurrence relation that expresses 
the optimal
 solution 
 to an instance (recursively)
 in terms of optimal solutions to smaller 
 subinstances.
Our aim here is to
 derive such
a recurrence relation. 
In fact, although we are interested only in
$K[1,{\cal B}]$,	
we need to find the values of
the
entries of $K$
corresponding
to smaller instances of the problem. 
One question to be addressed here is
that
what does small mean in this context?
Some notion of the ``size'' of a subinstance
 is required here. 
 We sort the
 subinstances by size
 (from smallest to largest),
 and 
 solve them in
 increasing order of size.
 As we shall see shortly,
the only factor that determines the
size of the subinstance
corresponding to $K[\underline{i},{\cal B}']$
is indeed
the size of ${\cal B}'$.

	The objective here is to derive 
a recurrence for $K[\underline{i},{\cal B}']$.
To derive this  underlying
recurrence relation, we
need to consider all possible ways to choose 
the first block of ${\cal B}'$
that appears in $\ground$.
	If $\underline{B}$ is the first block of
	${\cal B}'$ that appears in $\ground$,
	then according to our definition of
	$K[\underline{i},{\cal B}']$,
	the 
	minimum number of segments needed
	for placing the elements of
	 the remaining blocks
	in the sequence $\ground$
	is 
	$K[\omega(\underline{i},\underline{B})+_n1,
	{\cal B}'\setminus \{\underline{B}\}]$.
	Therefore,
	under the restriction that
	none of the 
	elements of $\bigcup_{B\in {\cal B'}}B$
	is permitted to appear
	before the $\underline{i}$th
	position of the first segment,
	if $\underline{B}$ is the first block of
	${\cal B}'$ that appears in $\ground$,
	then the blocks of ${\cal B}'$
	together 
	occupy  
	$\delta(\underline{i},\underline{B})+
	K[\omega(\underline{i},\underline{B})+_n1,
	{\cal B}'\setminus \{\underline{B}\}]$
	segments.
Hence, we have the following recurrence: 
\begin{equation}\label{main.recurrence}
K[\underline{i},{\cal B}']=
\min_{\underline{B}\in {\cal B}'}\left\{
\delta(\underline{i},\underline{B})+
K[\omega(\underline{i},\underline{B})+_n1,{\cal B}'\setminus \{\underline{B}\}]
\right\}.
\end{equation}
Therefore,
the recursive property that yields the 
optimal 
solution value of the
original instance of the problem
is
$K[1,{\cal B}]=
\min_{\underline{B}\in {\cal B}}\left\{
\delta(1,\underline{B})+
	K[\omega(1,\underline{B})+_n1,
	{\cal B}\setminus \{\underline{B}\}]\right\}$.
This implies that
to obtain	
the value of an optimal solution
to the original instance
$(n,t,{\cal B},1)$,
we need to determine	
the optimal solution values of
the following $t$ subinstances:
$$(n-|\underline{B}|
,t-1,{\cal B}\setminus \{\underline{B}\},
\omega(1,\underline{B})+_n1),
\quad \text{
	for each
	$\underline{B}\in{\cal B}$}.$$
It remains  to specify the initial conditions.
We define the initial conditions as
$$
K[\underline{i},\varnothing]=
\begin{cases}
0,&\text{if $\underline{i}=1$},\\
1,&\text{otherwise}.
\end{cases}
$$
The rationale behind this definition
is as follows.
Assume 
that
in a solution
to a given TMP instance
$(n,t,{\cal B})$,
the
$(t-1)$th block
(i.e., the next-to-last block)
 of $\cal B$
that appears in $\ground$,
ends at the $i$th position
of the 
$\kappa$th segment, $i\in [n]$.
If
the last block of
$\cal B$ that
appears in $\ground$
is $\overline{B}\in {\cal B}$,
then we have 
the following equation,
which is obtained
by setting $\underline{i}=i+_n1$ and
${\cal B}'=\{\overline{B}\}$ in 
equation~(\ref{main.recurrence}):
\begin{equation}\label{init.cond}
K[i+_n1,\{\overline{B}\}]=
\delta(i+_n1,\overline{B})+
K[\omega(i+_n1,\overline{B})+_n1,\varnothing],
\end{equation}
There are three cases
to consider.
The first case indicates that
$K[1,\varnothing]=0$,
and the second and third cases together
imply that if $\underline{i}\neq 1$,
then $K[\underline{i},\varnothing]=1$:
\begin{itemize}
	\item
	If $\omega(i+_n1,\overline{B})=n$,
	then we necessarily have
	$K[i+_n1,\{\overline{B}\}]=1$.
	Moreover,
	according to
	Equation~(\ref{delta}),
	$\delta(i+_n1,\overline{B})=1$.
Now,
Equation~(\ref{init.cond}) implies that
	$K[\omega(i+_n1,\overline{B})+_n1,\varnothing]=0$.
	Since $\omega(i+_n1,\overline{B})=n$,
	we have $K[1,\varnothing]=0$.
	Notice that,
	in this case,
	the block $\overline{B}$
 starts  and ends at the $\kappa$th segment.
	\item
	If $\omega(i+_n1,\overline{B})< n$
	and
$\delta(i+_n1,\overline{B})=0$,
then
we necessarily have
$K[i+_n1,\{\overline{B}\}]=1$.
Now, Equation~(\ref{init.cond})
implies that
$K[\omega(i+_n1,\overline{B})+_n1,\varnothing]=1$.
In this case,
as in the first case,
the block $\overline{B}$
starts and ends at the $\kappa$th segment.

	\item
	If $\omega(i+_n1,\overline{B})< n$
and
$\delta(i+_n1,\overline{B})=1$,
then  we must have
$K[i+_n1,\{\overline{B}\}]=2$.
Again, Equation~(\ref{init.cond})
implies that
$K[\omega(i+_n1,\overline{B})+_n1,\varnothing]=1$.
In this case,
as opposed to the first two cases,
the block $\overline{B}$
either
starts at the $\kappa$th segment
and
ends at the $(\kappa+1)$th segment,
or
starts and ends at 
the $(\kappa+1)$th segment.
\end{itemize}

The (bottom-up table-based) version of our dynamic programming
algorithm for the Train Marshalling Problem,
the procedure 
\textsc{Bottom-Up-DP-TMP},
is presented in Algorithm~\ref{dp1}.
The correctness of this algorithm follows directly from (\ref{main.recurrence}).
Notice that in the table $K$, the only
entry in the column corresponding to 
$\cal B$ that we need to compute is 
$K[1, {\cal B}]$.
Hence, \textsc{Bottom-Up-DP-TMP} does
not compute all the entries
of this column,
but only $K[1, {\cal B}]$ (see line 16).
Although this procedure
determines the minimum number
of classification tracks needed
to rearrange the railcars in
an appropriate order, it
does not directly show how to 
rearrange them.
Therefore, besides the minimum number
 of needed classification
tracks, it returns 
a table $T$ using which an optimal
solution itself 
 can be constructed.
(This table 
provides us with
 the information we need to do
so.) $T$ has its rows
indexed by the
elements of $[n]$
and its columns indexed by the
subsets of $\cal B$.
The entry $T[\underline{i},{\cal B}']$
contains the first block that appears
 in $\ground$ in an optimal 
 solution to the TMP instance
$\left(\left|\bigcup_{B\in {\cal B}'}B\right|, 
\left|{\cal B}'\right|, {\cal B}', 
\underline{i}\right)$.
Using this table, the
procedure 
\textsc{Print-Optimal-Solution},
which is shown in
Algorithm~\ref{print},
prints out an optimal
 solution to a given TMP instance.
The rationale behind this procedure
is easy to grasp, so we omit a detailed discussion.

In \textsc{Bottom-Up-DP-TMP},
the time in both the first loop
(lines~5--7)
 and the last  loop  (line~16)
 is insignificant compared to the time in the
middle loop
(lines~8--15),
 because the middle loop contains various levels of nesting. (The loops are nested four deep.)
The time complexity of this nested loop
is 
$$
n\sum_{j=1}^{t-1}j\binom{t}{j}=
n\sum_{j=1}^{t-1}t\binom{t-1}{j-1}=
nt\sum_{j=0}^{t-2}\binom{t-1}{j}=nt(2^{t-1}-1).
$$
Therefore,
the time complexity of
the whole algorithm is also in $O(nt2^t)$.
It can easily be verified that
the
space complexity of 
the algorithm
is $O(n2^t)$.
If we
need only the \textit{value}
of an optimal solution,
and not an
optimal solution itself, 
then
we do not need to compute the 
table $T$ anymore.
Moreover,  in such a case,
if $K_j$
denotes the subtable
of $K$ consisting of columns
whose corresponding
subsets of $\cal B$
are of size $j$, $1\leq j \leq t$,
then
for computing
the entries of
$K_j$,
we only need 
the entries 
of $K_{j-1}$.
By virtue of this fact,
the memory
complexity
of
\textsc{Bottom-Up-DP-TMP}
can be reduced to
$O\left(n\binom{t}{\lfloor\frac{t}{2}\rfloor}\right)$.
(With respect to the facts that
the number of
$j$-subsets of $\cal B$ is $\binom{t}{j}$,
and that
$\binom{t}{\lfloor\frac{t}{2}\rfloor}$
is the largest of
the binomial coefficients $\binom{t}{j}$,
$0\leq j\leq t$.)

	\begin{algorithm*}
	\caption{A (bottom-up table-based) dynamic programming
		algorithm for the TMP.}\label{dp1}
	\begin{algorithmic}[1]
		\Statex \textbf{Input:}
		An instance $(n,t,{\cal B})$
		of the \tmp{};
		\Statex \textbf{Output:}
		The minimum
		number $k_{\mathrm{opt}}$
		of classification tracks 
		needed to obtain a
		desired
		rearrangement of the railcars, 
		and a (two-dimensional) table 
		$T$
		from which an optimal 
		solution
		can be constructed;
		\Procedure{Bottom-Up-DP-TMP}{$n,\,t,\,{\cal B}$}
		\State Let $K$ be a new table
		whose rows are labeled from
		$1$ through $n$,
		and whose columns are
		labeled by the subsets of $\cal B$;
		\State Let $T$ be a new table with
		the same dimensions as $K$;
		\State $K[1,\varnothing]=0$;
		\For{$\underline{i}=2\ \mbox{\bf to}\ n$}
		\State $K[\underline{i},\varnothing]=1$;
		\EndFor
		\For{$j=1\ \mbox{\bf to}\ t-1$}
		\For{all subsets ${\cal B}'$ of 
			$\cal B$ of size $j$}
		\For{$\underline{i}=1\ \mbox{\bf to}\ n$}
		
		\State $K[\underline{i},{\cal B}']=
		\min_{\underline{B}\in {\cal B}'}
		\left\{
		\delta(\underline{i},\underline{B})+K[
		\omega(\underline{i},\underline{B})+_n1,
		{\cal B}'\setminus \{\underline{B}\}]
		\right\}$;
		
		\State $T[\underline{i},{\cal B}']=$
		an arbitrary element of the set
		$\operatorname{argmin}_{\underline{B}
			\in {\cal B}'}\left\{
			\delta(\underline{i},\underline{B})+
			K[\omega(\underline{i},\underline{B})+_n1,
		{\cal B}'\setminus \{\underline{B}\}]
		\right\}$;
		\EndFor
		\EndFor
		\EndFor
		
		\State $k_{\mathrm{opt}}=
		\min_{\underline{B}\in {\cal B}}\left\{
			\delta(1,\underline{B})+
			K[\omega(1,\underline{B})+_n1,
		{\cal B}\setminus \{\underline{B}\}]
		\right\}$;
		\State $T[1,{\cal B}]=$
		an arbitrary element of the set
		$\operatorname{argmin}_{\underline{B}
			\in {\cal B}}\left\{
		\delta(1,\underline{B})+
		K[\omega(1,\underline{B})+_n1,
		{\cal B}\setminus \{\underline{B}\}]
		\right\}$;
		\State {\bf return} $k_{\mathrm{opt}}$ and $T$;
		\EndProcedure
	\end{algorithmic}

\end{algorithm*}

	\begin{algorithm*}
	\caption{A procedure for 
		constructing an optimal solution
		to a given TMP instance
		using the table $T$ returned by
		the procedure
		\textsc{Bottom-Up-DP-TMP}.
	}\label{print}
	\begin{algorithmic}[1]
		\Statex \textbf{Input:}
		An instance $(n,t,{\cal B})$
		of the \tmp{},
		and the table
		$T$ returned by the procedure
		\textsc{Bottom-Up-DP-TMP};
		\Statex \textbf{Output:}
		An optimal solution to the given TMP instance;
		\Procedure{Print-Optimal-Solution}{$n,\,t,\,{\cal B},\,T$}
		\State ${\cal B}'={\cal B};$
		\State $\underline{i}=1$;
		\While{${\cal B}'\neq \varnothing$}
		\State  $\underline{B}=T[\underline{i},{\cal B}']$;
		 \State Print $\underline{B}$;
		 \State ${\cal B}'={\cal B}'\setminus \{\underline{B}\}$;
		 \State $\underline{i}=\omega(\underline{i},\underline{B})+_n1$;
		 \EndWhile
		\EndProcedure
	\end{algorithmic}
	\end{algorithm*}
The time complexity 
of the
procedure
\textsc{Bottom-Up-DP-TMP}
is better than that of the
best currently known algorithm
for the TMP \cite{rinaldi}.
 However, it
 still has room for improvement.
We will now discuss some
modifications that  can be made
to improve the effectiveness of this
procedure.
We first state two lemmas.

\begin{lemma}\label{firstLemma}
	Let an instance $(n,t,{\cal B})$
	of the TMP be given.
	When
	applying 
	the procedure
	\textsc{Bottom-Up-DP-TMP}
	to
	this instance,
	for each ${\cal B}'\subseteq {\cal B}$,
	the 
	entries
	$$K[\underline{i}+_n1,{\cal B}'],
	\quad\underline{i}\in \bigcup_{B\in {\cal B}'}B,$$
	are not required 
	to be computed.	
\end{lemma}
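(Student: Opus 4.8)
The plan is to show that every entry listed in the lemma is never \emph{read} during the execution of \textsc{Bottom-Up-DP-TMP}, so that declining to compute it cannot alter the algorithm's output. The first step is to pin down precisely which entries of $K$ the algorithm ever consults. Scanning Algorithm~\ref{dp1}, the only places where an entry of $K$ occurs on a right-hand side are the $\min$/$\operatorname{argmin}$ expressions in lines~11--12 and lines~16--17, and in each of these the entry read has the shape $K[\omega(\underline{i},\underline{B})+_n1,\,{\cal C}]$ for some ${\cal C}\subseteq{\cal B}$, some block $\underline{B}$ that has been removed from a strictly larger subset (so $\underline{B}\in{\cal B}\setminus{\cal C}$), and some $\underline{i}\in[n]$ (indeed $\underline{i}=1$ in lines~16--17). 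Hence, for a fixed ${\cal B}'\subsetneq{\cal B}$, every row index $r$ for which $K[r,{\cal B}']$ is ever accessed lies in $\{\,\omega(\underline{i},\underline{B})+_n1:\ \underline{B}\in{\cal B}\setminus{\cal B}',\ \underline{i}\in[n]\,\}$.

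The second ingredient is the elementary observation that $\omega(\underline{i},\underline{B})$, being the index of the last occurrence of an element of $\underline{B}$ in $\ground$, is itself an element of the block $\underline{B}$ (viewed as a subset of $[n]$); this is exactly what the Case~1 / Case~2 discussion preceding~(\ref{delta}) shows, since there $\omega(\underline{i},\underline{B})$ is $i_\ell$ or $i_l$. Thus whenever $K[r,{\cal B}']$ is accessed we may write $r=m+_n1$ with $m\in\underline{B}$ for some $\underline{B}\in{\cal B}\setminus{\cal B}'$. Because the blocks in $\cal B$ are pairwise disjoint and $\bigcup_{B\in{\cal B}}B=[n]$, such a block $\underline{B}$ is disjoint from $\bigcup_{B\in{\cal B}'}B$, and therefore $m\notin\bigcup_{B\in{\cal B}'}B$.

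The two facts are then combined using that $\underline{i}\mapsto\underline{i}+_n1$ is a bijection of $[n]$ onto itself: if some accessed entry $K[r,{\cal B}']$ were of the forbidden form $K[\underline{i}+_n1,{\cal B}']$ with $\underline{i}\in\bigcup_{B\in{\cal B}'}B$, then $m+_n1=\underline{i}+_n1$ would force $m=\underline{i}\in\bigcup_{B\in{\cal B}'}B$, contradicting $m\notin\bigcup_{B\in{\cal B}'}B$. So none of the listed entries in a proper-subset column is ever read, hence none need be computed. For ${\cal B}'={\cal B}$ the claim holds as well: the only quantity attached to that column is $K[1,{\cal B}]=k_{\mathrm{opt}}$, which line~16 forms once directly from entries of columns ${\cal B}\setminus\{\underline{B}\}$ and which is never itself read back, so again nothing on the list is needed.

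I expect the routine parts — enumerating the read sites and invoking the cyclic-successor bijection — to be painless; the only place calling for a little care is the bookkeeping around the wrap-around convention `$+_n$', together with verifying that the $\operatorname{argmin}$ lines~12 and~17 access exactly the same entries as the $\min$ lines~11 and~16, so that the description of ``accessed entries'' in the first step is genuinely exhaustive.
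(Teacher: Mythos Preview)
Your proposal is correct and follows essentially the same line as the paper's own proof: both identify that every read of a column-${\cal B}'$ entry has row index $\omega(h,B)+_n1$ for some $B\in{\cal B}\setminus{\cal B}'$, then use that $\omega(h,B)\in B$ together with the partition $\bigcup_{B\in{\cal B}}B=[n]$ and the bijectivity of $+_n1$ to conclude that the forbidden entries are never accessed. Your treatment is slightly more explicit (you separately account for the $\operatorname{argmin}$ lines and the case ${\cal B}'={\cal B}$), but the argument is the same.
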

\begin{proof}
	It can be observed from
	lines~11~and~16 of \textsc{Bottom-Up-DP-TMP}
	that
	for any $\underline{i}\in [n]$
	and any ${\cal B}'\subseteq {\cal B}$,
	the entry 
	$K[\underline{i},{\cal B}']$
	is required to be computed
	only if
	for some $h\in[n]$
	and some $B\in {\cal B}\setminus {\cal B}'$,
	we have $\underline{i}=\omega(h,B)+_n1$.
	It is not difficult
	to verify that 
	$
	\{\omega(h,B)+_n1\,|\,
	h\in[n],\,B\in {\cal B}\setminus {\cal B}'\}
	=\left\{h+_n1\,|\,h\in
	\bigcup_{B\in {\cal B}\setminus {\cal B}'}B  \right\}
	$.
	Now,
in the column corresponding
to ${\cal B}'$,
	none of 
	the entries 
	whose row indices
belong to the set
	$\left\{h+_n1\,|\,h\in
\bigcup_{B\in {\cal B}'}B  \right\}=
[n] \setminus
\left\{h+_n1\,|\,h\in
\bigcup_{B\in {\cal B}\setminus {\cal B}'}B  \right\}$
	need to be computed.
	\end{proof}

\begin{lemma}\label{secondLemma}
	Consider an instance $(n,t,{\cal B})$
	of the TMP.
	Let ${\cal B}'\subseteq {\cal B}$ be given,
	and
	let $\ell=\sum_{B\in{\cal B}'}|B|$.
	($|B|$ denotes the length of
	$B$.)
	Let
	$\sigma({\cal B}')=\left\langle i_1,i_2,\ldots,
	i_{\ell}\right\rangle$
	 be an increasing sequence whose
	elements are the elements of
	$\bigcup_{B\in {\cal B}'}B$.
	If $l\in[\ell-1]$,
	then for all distinct integers
	$i_l < \underline{i},\underline{i'} \leq i_{l+1}$
	we have
	$K[\underline{i},{\cal B}']=K[\underline{i'},{\cal B}']$.
	This fact holds also for
	all distinct integers
	$1\leq \underline{i},\underline{i'} \leq i_1$,
	and
all distinct integers	
	$i_\ell< \underline{i},\underline{i'}\leq n$.
\end{lemma}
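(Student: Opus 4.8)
The plan is to reduce the statement to a \emph{pointwise} fact about the only two quantities through which $\underline{i}$ enters the recurrence~(\ref{main.recurrence}): the indicator $\delta(\underline{i},\underline{B})$ and the ending position $\omega(\underline{i},\underline{B})$, taken over the blocks $\underline{B}\in{\cal B}'$. Specifically, I will show that for every $\underline{B}\in{\cal B}'$ and for each of the three kinds of ``gap'' named in the statement --- the initial block $\{1,\dots,i_1\}$, a run $\{i_l+1,\dots,i_{l+1}\}$ between two consecutive terms of $\sigma({\cal B}')$, or the final block $\{i_\ell+1,\dots,n\}$ --- both $\delta(\cdot,\underline{B})$ and $\omega(\cdot,\underline{B})$ are constant on that gap. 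Granting this, the lemma is immediate: if $\underline{i}$ and $\underline{i'}$ lie in a common gap, then for each $\underline{B}\in{\cal B}'$ we get $\delta(\underline{i},\underline{B})=\delta(\underline{i'},\underline{B})$ and $\omega(\underline{i},\underline{B})+_n1=\omega(\underline{i'},\underline{B})+_n1$, so $K[\omega(\underline{i},\underline{B})+_n1,{\cal B}'\setminus\{\underline{B}\}]$ and $K[\omega(\underline{i'},\underline{B})+_n1,{\cal B}'\setminus\{\underline{B}\}]$ are literally the same table entry; hence the two minima defining $K[\underline{i},{\cal B}']$ and $K[\underline{i'},{\cal B}']$ are taken over termwise-equal families and coincide. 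When ${\cal B}'=\varnothing$ the statement is vacuous ($\sigma({\cal B}')$ is empty), and when ${\cal B}'={\cal B}$ it is vacuous because then $\sigma({\cal B}')=\langle1,2,\dots,n\rangle$ and no two distinct indices share a gap; so we may assume $\varnothing\neq{\cal B}'\subsetneq{\cal B}$ and invoke~(\ref{main.recurrence}) freely.

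For the pointwise fact, fix $\underline{B}\in{\cal B}'$ and observe that the case analysis preceding~(\ref{delta}), together with~(\ref{delta}) itself, shows that $\delta(\underline{i},\underline{B})$ and $\omega(\underline{i},\underline{B})$ depend on $\underline{i}$ only through the set $P=\{b\in\underline{B}:b<\underline{i}\}$ of elements of $\underline{B}$ that precede position $\underline{i}$: indeed $\delta(\underline{i},\underline{B})=0$ iff $P=\varnothing$ and $\max\underline{B}<n$, and $\omega(\underline{i},\underline{B})=\max\underline{B}$ if $P=\varnothing$ or $P=\underline{B}$ (the two alternatives of Case~1), while $\omega(\underline{i},\underline{B})=\max P$ otherwise (Case~2). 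So it is enough to check that $P$ does not change as $\underline{i}$ ranges over a single gap. This is exactly where the hypothesis $\underline{B}\subseteq\bigcup_{B\in{\cal B}'}B=\{i_1,\dots,i_\ell\}$ is used: since $i_l$ and $i_{l+1}$ are \emph{consecutive} terms of the increasing sequence $\sigma({\cal B}')$, no element of $\underline{B}$ lies strictly between them, hence for every $\underline{i}$ with $i_l<\underline{i}\le i_{l+1}$ we have $P=\{b\in\underline{B}:b\le i_l\}$, which is independent of $\underline{i}$; likewise the initial gap gives $P=\varnothing$ for every $\underline{i}\le i_1$, and the final gap gives $P=\underline{B}$ for every $\underline{i}>i_\ell$. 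This establishes the pointwise fact and therefore the lemma.

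The argument is short and needs no induction on $|{\cal B}'|$: the equality of the two $K$-entries attached to $\underline{i}$ and $\underline{i'}$ is not an inductive claim but the trivial observation that they carry identical indices. The only place that calls for a bit of care is matching the informal ``Case~1/Case~2'' description and formula~(\ref{delta}) against the $P=\varnothing$ / $P=\underline{B}$ / otherwise trichotomy; in particular one should verify that the sub-case of Case~1 in which every element of $\underline{B}$ precedes $\underline{i}$ (i.e.\ $P=\underline{B}$) still yields $\omega(\underline{i},\underline{B})=\max\underline{B}$ and, since then $\min\underline{B}<\underline{i}$, yields $\delta(\underline{i},\underline{B})=1$, consistently with the descriptions above.
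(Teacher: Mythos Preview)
Your argument is correct, but it proceeds along a different axis than the paper's own proof. The paper argues directly from the \emph{semantic} definition of $K[\underline{i},{\cal B}']$: since no element of $\bigcup_{B\in{\cal B}'}B$ lies strictly between $i_l$ and $i_{l+1}$, the constraint ``nothing may be placed before position $\underline{i}$'' is, for any $\underline{i}$ in the gap $(i_l,i_{l+1}]$, operationally the same as ``nothing may be placed before position $i_{l+1}$''; hence every feasible placement for one starting index is feasible for the other, and the optima coincide (and both equal $K[i_{l+1},{\cal B}']$). Your proof instead works \emph{syntactically} through the recurrence~(\ref{main.recurrence}), showing that for each $\underline{B}\in{\cal B}'$ the pair $(\delta(\underline{i},\underline{B}),\omega(\underline{i},\underline{B}))$ is constant on each gap, so the minimands themselves are identical term by term. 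The paper's route is shorter and more conceptual; yours is more mechanical but has the virtue of being fully formal---it relies only on the recurrence and on~(\ref{delta}), not on the informal description of $K$---and it makes explicit exactly which auxiliary quantities are gap-invariant, which is precisely the information the memoized procedure (lines~8--20 of \textsc{Lookup}) exploits. Your handling of the boundary cases ${\cal B}'=\varnothing$ and ${\cal B}'={\cal B}$ is also fine, though the latter restriction is not strictly needed: the recurrence~(\ref{main.recurrence}) is valid for ${\cal B}'={\cal B}$ as well, and the vacuity you note already disposes of that case.
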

\begin{proof}
Let
$i_l < \underline{i}\neq\underline{i'} \leq i_{l+1}$,
for some $l\in[\ell-1]$.
In the optimal solutions
corresponding to
the entries
$K[\underline{i},{\cal B}']$
and
$K[\underline{i'},{\cal B}']$,
no matter 
which block occurs 
first,
the first element of $\bigcup_{B\in {\cal B}'}B$
cannot appear before the 
$i_{l+1}$th
position
of
the first segment.
Therefore,
due to
our definition of
the table $K$,
it can be verified that
$K[\underline{i},{\cal B}']
=K[\underline{i'},{\cal B}']
=K[i_{l+1},{\cal B}']$.
The proof of the remaining assertions of the
lemma 
(i.e.,
$K[\underline{i},{\cal B}']=K[\underline{i'},{\cal B}']$
for $1\leq \underline{i}\neq \underline{i'} \leq i_1$, and 
$K[\underline{i},{\cal B}']=K[\underline{i'},{\cal B}']$
for $i_\ell< \underline{i}\neq \underline{i'}\leq n$)
is analogous.
\end{proof}

The first lemma asserts
that,
in the column 
of the 
table $K$
corresponding 
to ${\cal B}'$,
the entries of the rows
whose indices
belong to
$\left\{h+_n1\,|\,
h\in \bigcup_{B\in {\cal B}'}B\right\}$
do not need to
be computed.
The second lemma states
that, 
if
$\sigma({\cal B}')=\left\langle i_1,i_2,\ldots,
i_{\ell}\right\rangle$
is an increasing sequence whose
elements are the elements of
$\bigcup_{B\in {\cal B}'}B$, then
in the column 
of the 
table $K$
corresponding 
to ${\cal B}'$,
the entries of
the rows whose indices
belong to
$\{1,2,\ldots,i_1\}$
are all equal to each other,
the entries of
the rows whose indices
belong to	
$\{i_l+2,i_l+3\ldots,i_{l+1}\}$,
$l\in[\ell-1]$,
are all equal to each other,
and
the entries of
the rows whose indices
belong to
$\{i_{\ell}+2,i_{\ell}+3\ldots,n\}$
are all equal to each other.
It should be highlighted that,
if 
for some $l\in[\ell-1]$,
$i_l$ and $i_{l+1}$
are two consecutive integers,
then
$\{i_l+2,i_l+3\ldots,i_{l+1}\}$
is empty.
Therefore,
we do not need to do anything about  it,
which means
even
 less computational effort.
Similarly, if
$i_\ell=n-1$, then
$\{i_{\ell}+2,i_{\ell}+3\ldots,n\}$
is empty, and the same conclusion holds.
\begin{example}
	Consider the TMP instance
	given by
	(\ref{n17t5}).
	For the
	subset
	${\cal B}'=\{B_1,B_2,B_3\}$
	of ${\cal B}$,
	we have
	$
	\sigma({\cal B}')=\langle1,4,5,10,11,12,13,14,17\rangle.
	$
Therefore, only four 
entries in the column corresponding to
${\cal B}'$ need be computed,
one corresponding to
each of the sets
$\{1\}$, $\{3,4\}$, $\{7,8,9,10\}$,
and $\{16,17\}$.
The other entries in 
this column
either
do not need to be computed at all or
require to simply be 
retrieved according to Lemma~\ref{secondLemma}.
Without the aid of 
the above two lemmas,
we need to obtain all
$n=17$ entries in this column.
\end{example}

Now we are ready
to present the improved 
version of our algorithm.
The improved version, the
procedure 
\textsc{Memoized-DP-TMP},
which is
detailed in Algorithm~\ref{memoized},
is indeed a  top-down (recursive)
\textit{memoized} version
of the procedure
\textsc{Bottom-Up-DP-TMP}
that
takes advantage of 
the results stated in
Lemmas~\ref{firstLemma} and \ref{secondLemma}.
In a bottom-up dynamic-programming algorithm,
we fill a table with 
solutions to \textit{all} smaller subinstances.
But 
solutions to some of
these smaller subinstances
may not be necessarily required
for obtaining a solution
to the original instance.
In fact,
in a DP framework,
we do not necessarily
need to solve all the subinstances 
in order to find an
optimal solution to the original problem.
It is natural to try  to develop
 an
 improved
  mechanism  that solves only
  those subinstances
  that
are necessarily needed. 
This approach is called \textit{Memoization}%
~\cite{clrs}.
The procedure \textsc{Memoized-DP-TMP},
shown in Algorithm~\ref{memoized},
exactly as in the procedure 
\textsc{Bottom-Up-DP-TMP},
makes use of a table,
but it computes 
the entries of
this table
in an as-needed fashion.
The table
 is initially filled with $-1$s.
 The value $-1$
  indicates that 
  the corresponding subinstance has
  not yet been solved.
  Indeed,
  the table entries initially 
  contain $-1$ values to indicate that
they have not yet been filled in.
  Whenever the optimal value
  of a subinstance
   needs to be obtained,
    the procedure \textsc{Lookup},
    shown in Algorithm~\ref{memoized},
       checks the corresponding 
    entry in the table $K$ first. 
    If the entry is not equal to 
    $-1$,
    it is simply 
    retrieved from the
  table
  (lines~2--4
  of the procedure
  \textsc{Lookup});
  otherwise, it is computed 
  by making recursive calls 
  (line~5
  of the procedure
  \textsc{Lookup}).
  The
   result is then recorded
  in the table 
  with respect to 
  Lemmas~\ref{firstLemma}~and~\ref{secondLemma}.
   This means that
   when the value of the
   $(\underline{i},{\cal B}')$-entry of $K$
   is computed recursively,
   then the result
    is stored not only 
   in the entry itself, but also
   in the entries whose
   values are equal to that
   of the $(\underline{i},{\cal B}')$-entry
   by virtue of 
   Lemmas~\ref{firstLemma}~and~\ref{secondLemma}
   (lines~8--20
   of the procedure
   \textsc{Lookup}). 
   Notice that
   the procedure
\textsc{Memoized-DP-TMP}
only returns the value of
an optimal solution.   
This is for the sake of brevity and clarity.
It can easily be 
modified so that it can 
produce 
the table $T$,
which is
required 
as an input to 
\textsc{Print-Optimal-Solution}, as well.

	\begin{algorithm*}
	\caption{A memoized top-down dynamic programming
		algorithm for the TMP.}\label{memoized}
	\begin{algorithmic}[1]
		\Statex \textbf{Input:}
		An instance $(n,t,{\cal B})$
		of the \tmp{};
		\Statex \textbf{Output:}
		The minimum
		number 
		of classification tracks 
		needed to obtain a
		desired
		rearrangement of the railcars;
		\Procedure{Memoized-DP-TMP}{$n,t,{\cal B}$}
		\State Let $K$ be a new table
		whose rows are labeled from
		$1$ through $n$,
		and whose columns are
		labeled by the subsets of $\cal B$;
		\State Initialize all entries of $K$ to $-1$;
		\State $K[1,\varnothing]=0$;
		\For{$i=2\ \mbox{\bf to}\ n$}
		\State $K[i,\varnothing]=1$;
		\EndFor
		\State
		{\bf return} $\min_{
			\underline{B}\in {\cal B}}
		\left\{
		\delta(1,\underline{B})+
		\text{\textsc{Lookup}}(K,\omega(1,\underline{B})+_n1,
		{\cal B}\setminus \{\underline{B}\})
		\right\}$;
		\EndProcedure	
	\end{algorithmic}
	
	\hrulefill
	
	\begin{algorithmic}[1]
		\Procedure{Lookup}{$K,\underline{i},{\cal B}'$}
		\If{$K[\underline{i},{\cal B}']\neq -1$}
		\State {\bf return} $K[\underline{i},{\cal B}'];$
		\EndIf
		\State
		$k=
		\min_{\underline{B}\in {\cal B}'}
		\left\{
		\delta(\underline{i},\underline{B})
		+\text{\textsc{Lookup}}
		(K,\omega(\underline{i},\underline{B})+_n1,
		{\cal B}'\setminus \{\underline{B}\})
		\right\}$;
		
		\State Let $\ell=\sum_{B\in{\cal B}'}|B|$;
		\State Let
		$\sigma=\left\langle i_1,i_2,\ldots,
		i_{\ell}\right\rangle$
		be an increasing sequence whose
		elements are the elements of
		$\bigcup_{B\in {\cal B}'}B$;
		\If{$1\leq \underline{i} \leq i_1$}
		\For{$i=1\ \text{\bf to}\ i_1$}
		\State $K[i,{\cal B}']=k$;
		\EndFor
		\ElsIf{$i_l+2 \leq \underline{i} \leq i_{l+1}$
			for some 
			$1\leq l \leq \ell-1$}
		
		\For{$i=i_l+2\ \text{\bf to}\ i_{l+1}$}
		\State $K[i,{\cal B}']=k$;
		\EndFor
		
		\Else
		\For{$i=i_\ell+2\ \text{\bf to}\ n$}
		\State $K[i,{\cal B}']=k$;
		\EndFor
		
		\EndIf
		
		\State {\bf return} $k$;
		\EndProcedure	
	\end{algorithmic}
	
\end{algorithm*}

\section{Computational Results}
	
 This section is devoted to evaluating the 
 performance 
  of the proposed technique
against  the best currently available
approach
to the TMP,
 which is a DP
procedure based on the 
principle of inclusion-exclusion,
proposed in~\cite{rinaldi}.
The
comparison has been made using
540 randomly generated instances
of the TMP, in its optimization version
(10 instances
for each considered value of
$n$ and $t$).%
\footnote{%
The literature on the
\tmp{}
 is relatively recent. 
 To our knowledge,
 there is no benchmark dataset available
 in the literature for the problem.
 We therefore created our own dataset,
 which introduces a rather wide range
 of TMP instances, with varying values of
 $n$ and $t$.
	We have made all
	the problem 
	instances,
	as well as their optimal solutions, 
	available
	online on
	the webpage
	 \url{https://github.com/hfalsafain/Train-Marshalling-Problem}.
	 All codes
will also be made available to download
after publication.
}
No experimental results have been reported 
in
\cite{rinaldi}.
The following results are therefore
based on our own implementation of
the algorithm described in \cite{rinaldi}.
Care was taken to implement the algorithms as efficiently as possible.
However,
this certainly does not mean that there is no room for improvement.
The algorithms
have been implemented in
Maple~18.00,
and the experiments have been
carried out on an Intel~Core~i5-3330~CPU
at~3.0--3.2~GHz desktop
computer with
4.00~GB of RAM, running Microsoft~Windows~8.1 operating system.

Before going further,
let us describe a simple 
preprocessing step that,
without altering
the optimal value,
can reduce the 
number of railcars in
 the 
input instance (see \cite[Lemma~1]{rinaldi}). 
This consequently can reduce
the execution time of the algorithms.
	Let an instance
	$(n,t,{\cal B})$ of
	the TMP be given. If there
	exist 
	$n'\geq2$ consecutive railcars
	that all share the same destination,
	then the instance
	$(n-n'+1,t,{\cal B}')$
	obtained by keeping
	only
	one of these railcars
	and removing the others,
	has the same optimal value as the original.
	Therefore, we can
	safely shrink the 
	number of railcars in
	the given instance by
	repeatedly eliminating
	all such railcars.

\begin{table*}[thb]
	\caption{A comparison of the average
	running times 
	(in seconds)
	of
	our method
	(the procedure \textsc{Memoized-DP-TMP})
	against those of the 
	method of~\cite{rinaldi}.
	The acronym
	``TLE''
	stands for ``Time Limit Exceeded''}
\label{main.table}	
	\centering
	
	\footnotesize
	\setlength{\tabcolsep}{3.5pt}
	\begin{tabular}{|c|c|cccccc|}
		\hline
		$n$ & Approach& $t=5$ & $t=7$&$t=9$&$t=11$& $t=13$ & $t=15$\\
		\hline
		50 & Rinaldi \& Rizzi (2017)&
		12.019& 153.997 & 1114.678 &TLE&TLE&TLE\\
		&Our Algorithm~3 & 0.042 &
		0.196 & 1.228 & 7.053 & 28.991 & 144.902\\
		\hline
		75& Rinaldi \& Rizzi (2017) &
		18.067 & 249.447 & 1964.511&TLE&TLE&TLE\\
		&Our Algorithm~3 & 0.053 & 0.245&
		1.801 & 11.209& 49.844 & 269.347\\
		\hline
		100& Rinaldi \& Rizzi (2017)&24.219 & 337.292 & 3288.679&TLE&TLE&TLE\\
		&Our Algorithm~3&0.061&0.294&2.225&15.588&72.216 & 470.611\\
		\hline
		200& Rinaldi \& Rizzi (2017) & 50.036 & 710.131&TLE&TLE&TLE&TLE\\
		&Our Algorithm~3&0.075&0.522&3.775&27.152&167.992&1157.106\\
		\hline
		300& Rinaldi \& Rizzi (2017) & 78.106 &1131.708&TLE&TLE&TLE&TLE\\
		&Our Algorithm~3 & 0.103
		&0.678&5.260&30.840&227.534&1595.130\\
		\hline
		400& Rinaldi \& Rizzi (2017) & 108.283&1599.885&TLE&TLE&TLE&TLE\\
		&Our Algorithm~3&
		0.119&0.861&7.044&41.290&281.694
		&1918.972\\
		\hline
		500& Rinaldi \& Rizzi (2017) & 140.439&2112.443&TLE&TLE&TLE&TLE\\
		&Our Algorithm~3&0.139&1.055&8.602&48.536&348.833&2341.133\\
		\hline
		750& Rinaldi \& Rizzi (2017) & 231.226&3248.614&TLE&TLE&TLE&TLE\\
		&Our Algorithm~3&0.206&1.694&12.958&72.902&493.286&3438.906\\
		\hline
		1000& Rinaldi \& Rizzi (2017) & 334.431&4825.828&TLE&TLE&TLE&TLE\\
		&Our Algorithm~3&0.288&2.336&17.233&97.056&677.042&4565.083\\
		\hline
	\end{tabular}

\end{table*}

A comparison of the average
running times 
(in seconds)
of
our method
(the procedure \textsc{Memoized-DP-TMP})
against those of the 
method of~\cite{rinaldi}
is tabulated in
Table~\ref{main.table}.
As stated above,
for each considered value of $n$ and $t$,
10 instances have been solved.
The table reports
the average solution time per instance.
In all runs, we 
imposed
a time limit of 5000 seconds.
In the table, ``TLE''
stands for ``Time Limit Exceeded.''
It should be noticed that
the method of \cite{rinaldi}
is inherently
designed for 
solving the decision version of
the TMP.
But 
it can be employed for solving
the optimization version of the
problem as well, by means of
a binary search procedure,
as stated in Section~I.
The resulting algorithm
is of time complexity $O(nt^22^tU\log_2U)$,
where $U$ is an upper-bound on
the optimal number of classification tracks.
In our implementation of the approach of \cite{rinaldi},
we have used the upper bound
$U=\min\left\{t,\left\lceil
\frac{n}{4}+\frac{1}{2}
\right\rceil\right\}$.
(Trivially, the
minimum number of classification tracks needed for
rearranging the railcars in an appropriate order
is at most $t$.
Furthermore,
it has been shown in \cite[Section~3]{dahlhaus}
that
the value
of an optimal solution to
a TMP instance
with $n$ railcars is at most 
$\left\lceil
\frac{n}{4}+\frac{1}{2}
\right\rceil$. See also \cite[Theorem~3]{rinaldi}.)	
Using a more
restrictive upper bound,
can obviously reduce
the number of calls
to a procedure for 
solving the decision problem
(during the binary search procedure).
However,
our Algorithm~\ref{memoized}
requires much less
computation
 time, even compared to the time required
for a single call to
the algorithm given in~\cite{rinaldi} for
solving a specific decision
 problem instance 
 (i.e., for a specific value of $k$).
For some of
the considered values of $n$ and $t$,
Table~\ref{table.second}
provides a comparison
of the average running times of
our procedure \textsc{Memoized-DP-TMP}
with those of
each of the calls
to the procedure described in~\cite{rinaldi}
for solving
the decision version of the problem
(called as \textsc{DTMP}
 in \cite[Algorithm~3]{rinaldi}).
As is evident from 
both Tables~\ref{main.table} and \ref{table.second}, our algorithm
substantially
outperforms its
inclusion-exclusion-based
counterpart
in terms of computation time.
	
We conclude this section
by making a remark
concerning the space complexity of
our approach.
Given a TMP instance $(n, t,{\cal B})$ and 
given
$k\in {\mathbb{N}}$,
for deciding whether or not 
the railcars can be rearranged in an
appropriate order by using at most $k$
classification tracks,
the algorithm described in \cite{rinaldi}
requires $O(nkt^22^t)$ time
and $O(nkt)$ space.
On the other hand,
our approach,
in its worst-case behavior,
requires 
$O(nt2^t)$ time
and $O(n2^t)$ space
for making such a decision.
Therefore,
from the time complexity
point of view,
our approach is superior
to the approach described in \cite{rinaldi}.
However,
in contrast to the method of \cite{rinaldi},
our algorithm requires exponential space
with respect to $t$.%
\footnote{%
As stated in Section~I,
the method of \cite{rinaldi}
	does not return an
	optimal
	solution itself.
	In fact,
	this algorithm,
	when used for solving the 
	decision version
	of the TMP,
	only returns
	a yes/no answer,
	and when used for solving the 
	optimization version
	of the TMP (in the
	manner descrined in Section~I),
	only computes
	the \textit{value} of an optimal solution.
	On the other hand,
	as has been discussed in the previous section,
	our approach
	not only returns the value of
	an optimal solution,
	but also
	the solution itself.
	Although
	typically a dynamic programming
	algorithm that only
	returns the 
	value of an optimal solution
	can easily be modified 
	in such a way that it can
	return an optimal solution
	itself~\cite[Chapter~15]{clrs},
it seems to us that
modification of the method of
\cite{rinaldi}
so that it can 
construct an optimal solution itself
leads to an exponential 
growth in 
the space complexity.}

\begin{table}[thb]
	\caption{A comparison
		of the average running times of
		our procedure \textsc{Memoized-DP-TMP}
		with those of
		each of the calls
		to the procedure \textsc{DTMP} 
		presented in
		\cite[Algorithm~3]{rinaldi}}
	\label{table.second}
\centering
\footnotesize
\begin{tabular}{|cc|c|c|c|}
	\hline
	$n$ & $t$ & \multicolumn{2}{c|}{Approach}
	& Time (in secs.) \\
	\hline
	50 & 5  && $k=2$&2.161\\
	& &DTMP~\cite[Algorithm~3]{rinaldi} &$k=3$   &3.994\\
	& & &$k=4$   &5.864\\
	\cline{3-4}
	& & 
	\multicolumn{2}{c|}{\textsc{Memoized-DP-TMP}} & \textbf{0.042} \\
	\hline
	500 & 5  && $k=2$&22.655\\
& &DTMP~\cite[Algorithm~3]{rinaldi} &$k=3$   &46.047\\
& & &$k=4$   &71.738\\
\cline{3-4}
& & 
\multicolumn{2}{c|}{\textsc{Memoized-DP-TMP}} & \textbf{0.139} \\
\hline
	1000 & 5  && $k=2$&51.017\\
& &DTMP~\cite[Algorithm~3]{rinaldi} &$k=3$   &107.788\\
& & &$k=4$   &175.627\\
\cline{3-4}
& & 
\multicolumn{2}{c|}{\textsc{Memoized-DP-TMP}} & \textbf{0.288} \\
\hline
50 & 7  && $k=3$&30.855\\
& &DTMP~\cite[Algorithm~3]{rinaldi} &$k=4$   &45.442\\
& & &$k=5$   &60.072\\
& & &$k=6$   &74.823\\
\cline{3-4}
& & 
\multicolumn{2}{c|}{\textsc{Memoized-DP-TMP}} & \textbf{0.196} \\
\hline
500 & 7  && $k=3$&345.878\\
& &DTMP~\cite[Algorithm~3]{rinaldi} &$k=5$   &765.925\\
& & &$k=6$   &1000.639\\
\cline{3-4}
& & 
\multicolumn{2}{c|}{\textsc{Memoized-DP-TMP}} & \textbf{1.055} \\
\hline
1000 & 7  && $k=3$&737.058\\
& &DTMP~\cite[Algorithm~3]{rinaldi} &$k=5$   &1738.398\\
& & &$k=6$   &2350.372\\
\cline{3-4}
& & 
\multicolumn{2}{c|}{\textsc{Memoized-DP-TMP}} & \textbf{2.336} \\
\hline
50 & 9  && $k=4$&269.248\\
& &DTMP~\cite[Algorithm~3]{rinaldi} &$k=5$   &359.379\\
& & &$k=6$   &446.749\\
& & &$k=7$   &555.890\\
\cline{3-4}
& & 
\multicolumn{2}{c|}{\textsc{Memoized-DP-TMP}} & \textbf{1.228} \\
\hline
100 & 9  && $k=4$&526.833\\
& &DTMP~\cite[Algorithm~3]{rinaldi} &$k=6$   &897.484\\
& & &$k=7$   &1090.555\\
& & &$k=8$   &1289.678\\
\cline{3-4}
& & 
\multicolumn{2}{c|}{\textsc{Memoized-DP-TMP}} & \textbf{2.225} \\
\hline
\end{tabular}

\end{table}

	\section{Conclusions}
	In this contribution, we developed a
	novel dynamic programming approach
	to the \tmp{} (TMP)
	whose worst-case time complexity is
	linear in the number of railcars,
	exponential in the number of destinations.
	One noticeable 
	difference between
	our approach
	and the previously proposed
	exact approaches to the TMP
	is that,
	in contrast with
	the previous works,
	our algorithm
	is designed to deal directly with the
	optimization version of the problem.
The worst-case time complexity of 
our method, like that of previous exact methods,
is exponential with respect to
the number of destinations.
However,
 in practice,
 our approach performs substantially
 better than the best currently available approach,
 which is an inclusion-exclusion-based 
 dynamic programming algorithm.
Our algorithm 
	can effectively tackle instances with
	relatively 
	large values of $t$.
	For example,
	it can solve, to optimality,
	instances involving
	$1000$ railcars and $15$
	destinations in about $1\frac{1}{4}$ hours.
The superior performance 
of the proposed technique
can mainly be
attributed to the following two reasons.
Firstly, we group together
the subinstances that
have the same optimal solution.
The optimal solution
is computed only once for each group.
Secondly, we employ the memoization
technique
to solve only
those subinstances whose optimal solutions
are necessarily needed to solve
the original instance.
In fact,
the main benefit of this technique is 
that only those table entries
that are needed are computed,
whereas in the 
bottom-up implementation, all
table entries get computed blindly.	
This can
significantly reduce the amount of
computation necessary.
Another advantage of our method compared to its
inclusion-exclusion-based counterpart
is that
our exact algorithm is capable of finding not only the
value of an optimal solution
(i.e., the minimum  number of required classification tracks),
but also the solution itself as well.

\bibliographystyle{ieeetr}

\end{document}